\def\draft{0}
\newtheorem{theorem}{Theorem}%[section]
\newtheorem{lemma}{Lemma}
\newtheorem{proposition}{Proposition}
\newtheorem{corollary}{Corollary}
\newtheorem{remk}{Remark}
\def\FullBox{\hbox{\vrule width 8pt height 8pt depth 0pt}}
\def\qed{\ifmmode\qquad\FullBox\else{\unskip\nobreak\hfil
\penalty50\hskip1em\null\nobreak\hfil\FullBox
\parfillskip=0pt\finalhyphendemerits=0\endgraf}\fi}
\newenvironment{proof}{\begin{trivlist} \item {\bf Proof:~~}}
  {\qed\end{trivlist}}
\def\qedsketch{\ifmmode\Box\else{\unskip\nobreak\hfil
\penalty50\hskip1em\null\nobreak\hfil$\Box$
\parfillskip=0pt\finalhyphendemerits=0\endgraf}\fi}
\newcommand{\ie} {{\it i.e.\ }}
\newcommand{\eg} {{\it e.g.\ }}
\newcommand{\N}{{\mathbb{N}}}
\newcommand{\zo}{\{0,1\}}
\newcommand{\pr}[1]{\Pr\left[#1\right]}
\newcommand{\eps}{\varepsilon}
\newcommand{\authnote}[2]{{ \bf [#1's Note: #2]}}
\newcommand{\authnote}[2]{}
\newcommand{\COMMENT}[1]{}
\def\01{\{0,1\}}
\def\01{\{0,1\}}
\newcommand{\cadre}[1]
{
\begin{tabular}{|p{15.4cm}|}
\hline
#1 \\
\hline
\end{tabular}
}
\newcommand{\Abort}{{\mathrm{Abort}}}
\title{Optimal quantum strong coin flipping}
\author{Andr\'e Chailloux$^*$ \\
LRI\\
Universit\'e Paris-Sud\\
andre.chailloux@lri.fr\\
\and
Iordanis Kerenidis\thanks{Supported in part by ANR CRAQ and AlgoQP grants of the French Ministry and in part by the European Commission under the Integrated Project Qubit Applications (QAP) funded by the IST directorate as Contract Number 015848.}\\
CNRS - LRI\\
Universit\'e Paris-Sud\\
jkeren@lri.fr}
\begin{document}

\maketitle

\begin{abstract}
Coin flipping is a fundamental cryptographic primitive that enables two distrustful and far apart parties to create a uniformly random bit~\cite{Blu81}. Quantum information allows for protocols in the information theoretic setting where no dishonest party can perfectly cheat. The previously best-known quantum protocol by Ambainis achieved a cheating probability of at most $3/4$ \cite{Amb01}. On the other hand, Kitaev showed that no quantum protocol can have cheating probability less than $1/\sqrt{2}$ \cite{Kit03}. Closing this gap has been one of the important open questions in quantum cryptography.

In this paper, we resolve this question by presenting a quantum strong coin flipping protocol with cheating probability arbitrarily close to $1/\sqrt{2}$. 
More precisely, we show how to use any weak coin flipping protocol with cheating probability $1/2+\eps$ in order to achieve a strong coin flipping protocol with cheating probability $1/\sqrt{2}+O(\eps)$. The optimal quantum strong coin flipping protocol follows from our construction and the optimal quantum weak coin flipping protocol described by Mochon \cite{Moc07}.

\end{abstract}

\section{Introduction}

Coin flipping is a cryptographic primitive that enables two distrustful and far apart parties, Alice and Bob, to create a random bit that remains unbiased even if one of the players tries to force a specific outcome. It was first proposed by Blum ~\cite{Blu81} and has since found numerous applications in two-party secure computation.
In the classical world, coin flipping is possible under computational assumptions like the hardness of factoring or the discrete log problem. However, 
in the information theoretic setting, it is not hard to see that in any classical protocol, one of the players can always bias the coin to his or her desired outcome with probability 1.  

Quantum information has given us the opportunity to revisit information theoretic security in cryptography. The first breakthrough result was a protocol of Bennett and Brassard~\cite{BB84} that showed how to securely distribute a secret key between two players in the presence of an omnipotent eavesdropper. Thenceforth, a long series of work has focused on which other cryptographic primitives are possible with the help of quantum information. Unfortunately, the subsequent results were not positive. Mayers and Lo, Chau proved the impossibility of secure quantum bit commitment and oblivious transfer and consequently of any type of two-party secure computation~\cite{May97,LC97,DKSW07}. However, several weaker variants of these primitives have been shown to be possible~\cite{HK04,BCH+08}.

The case of coin flipping is one of the most intriguing ones. Even though the results of Mayers and of Lo and Chau exclude the possibility of perfect quantum coin flipping, i.e. where the resulting coin is perfectly unbiased, it still remained open whether one can construct a quantum protocol where no player could bias the coin with probability 1. A few years later, Aharonov et al.~\cite{ATVY00} provided such a protocol where no dishonest player could bias the coin with probability higher than 0.9143. Then, Ambainis~\cite{Amb01} described an improved protocol whose cheating probability was at most $3/4$. Subsequently, a number of different protocols have been proposed~\cite{SR01,NS03,KN04} that achieved the same bound of $3/4$.

On the other hand, Kitaev~\cite{Kit03}, using a formulation of quantum coin flipping protocols as semi-definite programs proved a lower bound of $1/2$ on the product of the two cheating probabilities for Alice and Bob (for a proof see \eg ~\cite{ABD+04}). In other words, no quantum coin flipping protocol can achieve a cheating probability less than $1/\sqrt{2}$ for both Alice and Bob. 

The question of whether $3/4$ or $1/\sqrt{2}$ is ultimately the right bound for quantum coin flipping has been open since then. In fact, there had been ``evidence'' suggesting both cases. First, Kitaev's semi-definite program formulation of coin flipping seems to be a natural one and using this semi-definite program one cannot hope to prove a better lower bound. On the other hand, most of the suggested coin flipping protocols were using some form of imperfect bit commitment scheme. More precisely, Alice would quantumly commit to a bit $a$, Bob would announce a bit $b$ and then Alice would reveal her bit $a$. The outcome of the coin flip would be $a \oplus b$. However, Ambainis had proved a lower bound of $3/4$ for any protocol of this type and even though more complicated protocols based on similar ideas had been proposed, they all seemed to get stuck at the same $3/4$ bound.

During the study of quantum coin flipping, a weaker variant was introduced that is referred to as {\em weak coin flipping} in opposition to the original \emph{strong coin flipping}. In this setting, Alice and Bob have a priori a desired coin outcome, in other words the two values of the coin can be thought of as `Alice wins' and `Bob wins'. We are again interested in bounding the probability that a dishonest player can win this game.  

Weak coin flipping protocols with cheating probabilities less than $3/4$ were constructed in ~\cite{SR02,Amb02,KN04}. The best achieved bound was in fact $1/\sqrt{2}$, a strange coincidence, since Kitaev's lower bound of $1/\sqrt{2}$ does not apply in the case of weak coin flipping. The only lower bound that carries over from the case of strong coin flipping is a bound by Ambainis that shows that in order to achieve a cheating probability of $1/2+\eps$ the protocol must have at least $O(\log \log \frac{1}{\eps})$ rounds~\cite{Amb02}. We refer to $\eps$ as the bias of the protocol.

Finally, a breakthrough result by Mochon resolved the question of the optimal quantum weak coin flipping. First, he described a protocol with cheating probability $2/3$~\cite{Moc04,Moc05} and then a protocol that achieves a cheating probability of $1/2+\eps$ for any $\eps>0$ ~\cite{Moc07}. Kitaev's formalism and Mochon's optimal weak coin flipping protocol delve heavily into the theory of convex cones and operator monotone functions.

In this work, we resolve the question of the optimal quantum strong coin flipping protocol. We present a general method on how to use any weak coin-flipping protocol with cheating probability $1/2+\varepsilon$ in order to construct a strong coin-flipping protocol with cheating probability $1/\sqrt{2}+O(\eps)$. Our protocol uses roughly the same number of rounds as the weak coin flipping protocol. Combining our construction with Mochon's quantum weak coin flipping protocol that achieves arbitrarily small bias, we conclude that it is possible to construct a quantum strong coin flipping protocol with cheating probability arbitrarily close to $\frac{1}{\sqrt{2}}$. 

Let us make a few remarks about our protocol. First, it is a {\em classical} protocol that uses a weak coin flipping as a subroutine. In other words, in coin flipping, the power of quantum really comes from the ability to perform weak coin flipping. If there existed a classical weak coin flipping protocol with arbitrarily small bias, then this would have implied a classical strong coin flipping protocol with cheating probability arbitrarily close to $1/\sqrt{2}$ as well.
Moreover, our protocol has the advantages of being very easy to describe and having a straightforward analysis, assuming, of course, the existence of a weak coin flipping protocol. 

Using weak coin flipping in order to perform strong coin flipping is not a new idea. 
There is a trivial protocol that uses a perfect weak coin flipping and achieves strong coin flipping with cheating probability $3/4$: Alice and Bob run the weak coin flipping protocol and whoever wins, flips a random coin $c \in_R \zo$.

Our protocol can be thought of as a refinement of the abovementioned one. There are two simple ideas that we use. First, we will have Alice flip and announce the outcome of her random coin {\em before} Alice and Bob perform the weak coin flipping protocol. 
%If Alice wins the weak coin flipping, then the value she has announced is the outcome of the protocol. 
Second, we will use an ``unbalanced'' weak coin flipping, where in the honest case, Alice wins with probability $z$ and Bob with probability $1-z$. 

We can now describe informally our protocol  \\ \\
\cadre{
\begin{center} Strong Coin Flipping Protocol \end{center}
\begin{itemize}
\item Alice flips a random coin $a$ and sends $a$ to Bob. 
\item Alice and Bob run an unbalanced weak coin flipping protocol where honest Alice wins with probability $z$ and honest Bob with probability $1-z$. 
\item If Alice wins, the output is $a$.
\item If Bob wins, he outputs $a$ with probability $p$ and $\overline{a}$ with probability $1-p$.
\end{itemize}
}
$ \ $ \\\\

In Section~\ref{analysis}, we first show how to construct ``unbalanced'' weak coin flipping protocols for any $z$ and bias $O(\eps)$, assuming the existence of a ``balanced'' weak coin flipping protocol with bias $\eps$. Then, we optimize the parameters $z$ and $p$ in order to make the cheating probability of our protocol at most $1/\sqrt{2}+O(\eps)$.

\section{Definitions} \label{Preliminaries}

We provide the formal definitions of all the different variants of coin flipping protocols that we are going to use.

A coin flipping protocol between two parties Alice and Bob is a protocol where Alice and Bob interact and at the end, Alice outputs a value $c_A \in \{0,1,\Abort\}$ and Bob outputs a value $c_B \in \{0,1,\Abort\}$. If $c_A = c_B$, we say that the protocol outputs $c = c_A$. If $c_A \neq c_B$ then the protocol outputs $c = \Abort$.

In a coin flipping protocol, we call a round of communication one message from Alice to Bob and one message from Bob to Alice. We suppose that Alice always sends the first message and Bob always sends the last message. The protocol is quantum if we allow the parties to send quantum messages and perform quantum operations. A player is honest if he or she follows the protocol. A cheating player can deviate arbitrarily from the protocol but still outputs a value at the end of it.
There are two important variants of coin flipping that have been studied.

\paragraph{Weak coin flipping\\}
A (balanced) weak coin flipping protocol with bias $\varepsilon$ ($WCF(1/2,\varepsilon)$) has the following properties
\begin{itemize}
\item If $c = 0$, we say that Alice wins. If $c = 1$, we say that Bob wins.
\item If Alice and Bob are honest then $\pr{\mbox{ Alice wins }} = \pr{ \mbox{ Bob wins }} = 1/2$
\item If Alice cheats and Bob is honest then $P^*_A = \pr{ \mbox{ Alice wins }} \le 1/2 + \varepsilon$
\item If Bob cheats and Alice is honest then $P^*_B = \pr{ \mbox{ Bob wins }}\le 1/2 + \varepsilon$
\end{itemize}
The probabilities $P^*_A$ and $P^*_B$ are called the cheating probabilities of Alice and Bob respectively. The cheating probability of the protocol is defined as $\max\{P^*_A,P^*_B\}$. We say that the coin flipping is \emph{perfect} if $\eps = 0$.

 We can also define weak coin flipping for the case where the winning probabilities of the two players in the honest case are not equal.

\paragraph{Unbalanced weak coin flipping\\} 
A weak coin flipping protocol with parameter $z$ and bias $\varepsilon$ ($WCF(z,\varepsilon))$ has the following properties.

\begin{itemize}
\item If $c = 0$, we say that Alice wins. If $c = 1$, we say that Bob wins.
\item If Alice and Bob are honest then $\pr{\mbox{ Alice wins }} = z$ and $\pr{\mbox{ Bob wins }} = 1 - z$
\item If Alice cheats and Bob is honest then $P^*_A = \pr{\mbox{ Alice wins }} \le z + \varepsilon$
\item If Bob cheats and Alice is honest then  $P^*_B = \pr{\mbox{ Bob wins }} \le (1 - z) + \varepsilon$
\end{itemize}

\paragraph{Strong coin flipping\\} A strong coin flipping protocol with bias $\varepsilon$ ($SCF(\varepsilon)$) has the following properties
\begin{itemize}
\item If Alice and Bob are honest then $\pr{c = 0} = \pr{c=1} = 1/2$
\item If Alice cheats and Bob is honest then $P^*_A = \max\{\pr{c = 0},\pr{c = 1}\} \le 1/2 + \varepsilon$.
\item If Bob cheats and Alice is honest then $P^*_B = \max\{\pr{c = 0},\pr{c = 1}\} \le 1/2 + \varepsilon$
\end{itemize} 
Similarly, $P^*_A$ and $P^*_B$ are the cheating probabilities of Alice and Bob. The cheating probability of the protocol is defined as $\max\{P^*_A,P^*_B\}$. 

We will also use the following result by Mochon

\begin{proposition}{\em\cite{Moc07}}\label{Mochon}
For every $\varepsilon > 0$, there exists a quantum $WCF(1/2,\varepsilon) $ protocol $P$. 
\end{proposition}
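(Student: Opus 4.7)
The plan is to follow Kitaev's framework that reduces the construction of a weak coin flipping protocol to a combinatorial--analytic object called a \emph{point game}, and then to exhibit such a point game with arbitrarily small bias. First, I would recall Kitaev's SDP characterization of cheating probabilities: for any protocol, the optimal cheating probabilities $P_A^*, P_B^*$ can be computed by a semidefinite program whose primal variables correspond to sequences of positive semidefinite operators on Bob's (respectively Alice's) register, compatible with the honest protocol's reduced states. Duality then produces pairs of operators $(Z_A, Z_B)$ on the message register whose eigenvalues track the cheating bounds across rounds.

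Second, I would invoke Kitaev's reformulation: these dual operator pairs can be replaced by probability distributions over $\mathbb{R}_+ \times \mathbb{R}_+$, and the allowed transitions from one round to the next correspond to \emph{valid point game moves} — replacing a subdistribution supported on a single horizontal (or vertical) line by another with the same marginal on the orthogonal axis, subject to an expectation-type constraint. The key equivalence I would use is that a $WCF(1/2,\varepsilon)$ protocol exists if and only if there is a finite point game starting from $\tfrac{1}{2}\delta_{(0,1)} + \tfrac{1}{2}\delta_{(1,0)}$ and ending at a single point $(\beta,\beta)$ with $\beta \le 1/2 + \varepsilon$. This reduces the problem from analyzing unitaries and measurements to designing a sequence of moves on a finite set of weighted points in the plane.

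Third — and this is the main obstacle — I would construct, for each $\varepsilon > 0$, an explicit valid point game reaching a final point with both coordinates at most $1/2 + \varepsilon$. The difficulty is that the set of admissible horizontal/vertical moves is characterized by \emph{operator monotone functions}: a move from a distribution $p$ to $q$ along a line is legal iff a certain integral inequality holds for every operator monotone $f$ on $\mathbb{R}_+$. My approach would follow Mochon's: start from a simple high-bias game (e.g.\ one giving bias $1/6$, as in the $P_A^*,P_B^* \le 2/3$ construction), then iterate a ``ladder'' of moves that, round by round, compresses the supporting points closer to the diagonal $\{(x,x) : x \le 1/2 + \varepsilon\}$. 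Verifying admissibility reduces to checking a family of operator-monotone inequalities, which I would handle by expressing each move via the integral representation $f(x) = \alpha + \beta x + \int \frac{x\lambda}{x+\lambda}\,d\mu(\lambda)$ of operator monotone functions, reducing each check to a scalar inequality at every $\lambda > 0$.

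Finally, having produced a point game with bias $\varepsilon$, I would apply the converse direction of Kitaev's equivalence — which is algorithmic and constructive — to read off an explicit quantum weak coin flipping protocol with the same bias and with round complexity polynomial in the number of points in the game. The round count grows as $\varepsilon \to 0$, but the protocol exists for every fixed $\varepsilon > 0$, which is exactly the statement of the proposition. The conceptually hardest ingredient, and the one I would treat as a black box if possible, is the use of operator monotone function theory to certify the compression step; everything else is a translation between equivalent formalisms.
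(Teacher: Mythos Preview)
The paper does not prove this proposition at all: it is stated with a citation to~\cite{Moc07} and used purely as a black box. There is nothing to compare against beyond the reference itself.

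Your proposal is, in fact, a fair high-level sketch of Mochon's own argument in~\cite{Moc07}: Kitaev's SDP duality, the reformulation in terms of time-dependent point games on $\mathbb{R}_+^2$, the characterization of valid horizontal/vertical transitions via operator monotone functions, and the construction of a family of games whose final point approaches $(1/2,1/2)$. That is the correct architecture, and treating the operator-monotone certification of the compression moves as the hard core is also accurate. Two caveats if you intend this as more than a pointer to the literature: (i) the equivalence between point games and protocols is not quite a one-line ``converse of Kitaev''; going from a valid point game back to a protocol requires passing through time-independent point games and then dual SDP feasibility, and the resulting round complexity is not known to be polynomial in the number of points for Mochon's games --- the dependence on $\varepsilon$ is in fact quite bad; (ii) the ``ladder'' picture understates the construction: Mochon's games are parametrized by an integer $k$ and achieve bias $1/(4k+2)$ via increasingly intricate multipoint merges, not a simple iteration of a single compression step. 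None of this is a gap in your plan so much as a warning that the black box you invoke is genuinely deep, which is exactly why the present paper, like essentially all subsequent work, simply cites it.
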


%%%%%%%%%%%%%%%%%%%%%%%%%%%%%%%%%%%%%%%%%%%%%%%%%%%%%%%%%%%%%%%%%55
%%%%%%%%%%%%%%%%%%%%%%%%%%%%%%%%%%%%%%%%%%%%%%%%%%%%%%%%%%%%%%%%%55
%%%%%%%%%%%%%%%%%%%%%%%%%%%%%%PROTOCOL%%%%%%%%%%%%%%%%%%%%%%%%%%%55
%%%%%%%%%%%%%%%%%%%%%%%%%%%%%%%%%%%%%%%%%%%%%%%%%%%%%%%%%%%%%%%%%55

\section{An optimal strong coin flipping protocol} \label{Protocol}

In this section, we describe how to construct an optimal strong coin flipping protocol from any weak coin flipping protocol. Let us try to give some intuition for our protocol before we actually describe and analyze it. For this high level discussion, we assume the existence of a perfect weak coin flipping protocol. As we said, there exists a trivial protocol that uses a weak coin flipping in order to achieve strong coin flipping: 

 \begin{center}\textbf{$\mathbf{SCF(3/4)}$ protocol using a perfect weak coin flipping protocol $P$} \end{center}
\begin{itemize}
\item Alice and Bob run the protocol $P$
\item The winner chooses a random $c \in_R \zo$, and sends $c$ to the other player, $c$ being the outcome of the protocol.  
\end{itemize}  
 
Let us analyze this protocol more closely. Let Alice be dishonest and her desired value for the coin be 0. Her strategy will be to try and win the WCF protocol, which happens with probability $1/2$ and then output 0. However, even if she loses the weak coin flipping, there is still a probability $1/2$ that the honest Bob will output 0. Hence, Alice's (and by symmetry Bob's) cheating probability is $3/4$. 

In order to reduce this bias, we would like to eliminate the situation where the honest player, after winning the WCF, still helps the dishonest player cheat with probability 1/2. One can try to resolve this problem by having Alice flip and announce her random coin $c$ before running the WCF protocol. In this case: first, Alice announces a bit $a$. Then, Alice and Bob perform a WCF. If Alice wins the outcome is $a$; if Bob wins then the outcome is $\overline{a}$.

In this case, Bob never outputs $a$. However, there is a simple cheating strategy for Alice. If she wants 0, she sets $a=1$, loses the WCF (which she can do with probability 1) and therefore Bob always outputs 0. Hence, Bob's choice when he wins the WCF must be probabilistic. Let us now consider the following protocol: 
\begin{center}\textbf{Improved $SCF$ protocol using a perfect weak coin flipping protocol $P$} \end{center}
\begin{itemize}
\item Alice picks a random bit $a \in_R \zo$ and sends $a$ to Bob.
\item Alice and Bob run $P$.
\item If Alice wins then the outcome is $a$. 
\item If Bob wins then he outputs $a$ with probability $p$ and $\overline{a}$ with probability $1-p$. 
\end{itemize}  

First note that Bob's cheating probability is $3/4$, independent of $p$. Namely, if Alice picks the value Bob wants, then he just loses the WCF; if Alice picks the opposite value then he tries to win the WCF in order to pick his desired value. 
On the other hand, let us calculate Alice's cheating probability. Alice can pick $a$ to be equal to her desired outcome, in which case the final outcome is $a$ with probability $\frac{1}{2}\cdot 1 + \frac{1}{2}\cdot p$. She may also pick $a$ to be the opposite of her desired outcome, in which case she always loses the WCF and hence, the final outcome is $\overline{a}$ with probability $1-p$. Alice's cheating probability is the maximum of the two cases. By choosing $p=1/3$ the probabilities in the two cases are equal and we conclude that Alice's cheating probability is $2/3$.

Hence, using any balanced WCF protocol, we can have a strong coin flipping protocol that achieves cheating probability $3/4$ for Bob and $2/3$ for Alice. This in some sense already beats the best previously known protocol that can only achieve cheating probability $3/4$ for both Alice and Bob. 

Our next step is to make these two cheating probabilities equal. For this, we will use an unbalanced WCF with parameter $z$ and optimize $z$ and $p$ in order to get the optimal bound of $1/\sqrt{2}$.

We can now describe our final protocol that uses a $WCF(z,\eps)$ protocol $Q$ as a subroutine. 
\\ \\
\cadre{\begin{center} \textbf{Strong Coin Flipping protocol $S$} \end{center}
\begin{enumerate}
\item Alice chooses $a \in_R \zo $ and sends $a$ to Bob.
\item Alice and Bob perform the $WCF(z,\varepsilon)$ protocol $Q$
\begin{itemize}
\item If Alice wins $Q$ then honest players output $c_A = c_B = a$
\item If Bob wins $Q$ then he flips a coin $b$ such that $b = a$ with probability $p$ and $b = \overline{a}$ with probability $(1-p)$. He sends $b$ to Alice. In this case, honest players output $c_A = c_B = b$.
\item If $Q$ outputs Abort then Abort
\end{itemize}
\end{enumerate}
}\\\\

%%%%%%%%%%%%%%%%%%%%%%%%%%%%%%%%%%%%%%%%%%%%%%%%%%%%%%%%%%%%%%%
%%%%%%%%%%%%%%%%%%%%%%%%%%%%%%%%%%%%%%%%%%%%%%%%%%%%%%%%%%%%%%%
%%%%%%%%%%%%%%%%%%%%%%%%%%ANALYSIS%%%%%%%%%%%%%%%%%%%%%%%%%%%%%
%%%%%%%%%%%%%%%%%%%%%%%%%%%%%%%%%%%%%%%%%%%%%%%%%%%%%%%%%%%%%%%
%%%%%%%%%%%%%%%%%%%%%%%%%%%%%%%%%%%%%%%%%%%%%%%%%%%%%%%%%%%%%%%

\section{Analysis of the strong coin flipping protocol $S$} \label{analysis}

We first describe the construction of the unbalanced $WCF(z,\eps)$ protocol and then show how to optimize the parameters $z$ and $p$  in order to achieve the optimal bias.

%%%%%%%%%%%%%%%%%%%%%%%%%%%%%%%%%%%%%%%%%%%%%%%%%%%%%%%%%%%%%%%
%%%%%%%%%%%%%%%%%%%%%%%%%%%%%%%%%%%%%%%%%%%%%%%%%%%%%%%%%%%%%%%
%%%%%%%%%%%%%%%%%%%%%%%%%%UNBALANCED%%%%%%%%%%%%%%%%%%%%%%%%%%%
%%%%%%%%%%%%%%%%%%%%%%%%%%%%%%%%%%%%%%%%%%%%%%%%%%%%%%%%%%%%%%%
%%%%%%%%%%%%%%%%%%%%%%%%%%%%%%%%%%%%%%%%%%%%%%%%%%%%%%%%%%%%%%%

\subsection{An unbalanced weak coin flipping protocol}\label{Unbalanced}

Our goal is to prove the following proposition

\begin{proposition}~\label{Approx}
Let  $P$ be a $WCF(1/2,\varepsilon)$ protocol with $N$ rounds. Then, 
$\forall z \in [0,1]$ and $\ \forall k \in \N$, there exists a  $WCF(x,\varepsilon_0)$ protocol $Q$ such that:
\begin{itemize}
\item $Q$ uses $k\cdot N$ rounds.
\item $|x - z| \le 2^{-k}$.
\item $\varepsilon_0 \le 2\varepsilon$.
\end{itemize}
\end{proposition}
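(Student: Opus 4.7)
I will build $Q$ recursively, using the binary expansion of $z$ to drive the recursion. Truncate $z$ to $k$ bits and set $x=\sum_{i=1}^{k}z_i\, 2^{-i}$ with $z_i\in\{0,1\}$, so that $|x-z|\le 2^{-k}$ by construction. Define a chain of protocols $Q_0,Q_1,\dots,Q_k$, where $Q_j$ handles the first $j$ bits of the expansion using $j$ sequential independent executions of $P$. The base case $Q_0$ is the trivial $0$-round protocol that declares Alice the winner if $x\ge 1/2$ and Bob otherwise. In the inductive step, $Q_j$ first runs one copy of $P$ and then branches on the leading bit $z_1$: if $z_1=1$, an Alice-win in $P$ ends the protocol with Alice winning outright, while a Bob-win in $P$ causes the parties to recursively execute $Q_{j-1}$ on the rescaled parameter $2x-1$; if $z_1=0$, the roles are swapped, so a Bob-win in $P$ ends the protocol and an Alice-win invokes $Q_{j-1}$ on $2x$. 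An abort of $P$ propagates to an abort of the whole protocol. The final protocol is $Q=Q_k$, which uses $kN$ rounds in total.

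\textbf{Analysis.} A one-line induction shows that honest Alice wins $Q_j$ with probability exactly equal to the $j$-bit truncation of $z$ (in the $z_1=1$ branch this is $\tfrac12\cdot 1+\tfrac12(2x-1)=x$, and symmetrically in the other branch), so the honest winning probabilities of $Q=Q_k$ are $x$ and $1-x$ with $|x-z|\le 2^{-k}$. For the bias I would prove by induction on $j$ that $P_A^*(Q_j)\le x+2\eps$ and $P_B^*(Q_j)\le(1-x)+2\eps$. In the branch $z_1=1$, cheating Alice's best strategy is to maximize her probability of winning the first copy of $P$, since a win there ends the protocol in her favor and a loss merely sends her into $Q_{j-1}$ as cheating Alice; combining the WCF guarantee $\Pr[\text{Alice wins }P]\le\tfrac12+\eps$ with the inductive hypothesis gives
\[
P_A^*(Q_j)\le\left(\tfrac12+\eps\right)+\left(\tfrac12-\eps\right)\bigl((2x-1)+2\eps\bigr),
\]
which, using $x\ge 1/2$ on this branch, simplifies to $x+2\eps$. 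Cheating Bob must win $P$ to reach the recursive stage at all, so $P_B^*(Q_j)\le(\tfrac12+\eps)\,P_B^*(Q_{j-1})$, and a similar algebraic simplification yields the matching bound $(1-x)+2\eps$. The branch $z_1=0$ is symmetric.

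\textbf{Main obstacle.} The cheater who must win $P$ to proceed (Bob in the $z_1=1$ branch, Alice in the $z_1=0$ branch) gives rise to a recurrence whose multiplicative factor is the unfavorable $(1/2+\eps)$ rather than $(1/2-\eps)$, so the geometric series does not telescope cleanly to $2\eps$; one must check that the $z$-dependent factors $(1-x)$ or $x$ appearing in the recurrence absorb the $O(\eps^2)$ correction terms and keep the bias bounded by $2\eps$ uniformly in $j$. The computation is elementary but needs care, and the bound $\eps_0\le 2\eps$ goes through in the small-$\eps$ regime used throughout the paper.
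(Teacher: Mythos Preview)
Your construction is close in spirit to the paper's but differs in one crucial respect, and this difference is exactly what creates the gap you flag in your ``Main obstacle'' paragraph.

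\textbf{The difference in construction.} The paper does not combine the recursive protocol with a trivial ``win immediately'' protocol. Instead it maintains \emph{two} nontrivial protocols $P_1,P_2$ with parameters $x_1\le z\le x_2$ and $x_2-x_1=2^{-k}$, and at each step forms $Comb(P_1,P_2)$: run $P$, on Alice-win execute $P_2$, on Bob-win execute $P_1$. The new protocol has parameter $(x_1+x_2)/2$; one then keeps whichever pair brackets $z$. The key lemma is that if $P_1,P_2$ both have bias $\le\eps_0$, then $Comb(P_1,P_2)$ has bias $\le \eps_0+\eps(x_2-x_1)$. Since $x_2-x_1$ halves at every step, the total bias is $\eps\sum_{i\ge 0}2^{-i}\le 2\eps$, with no $O(\eps^2)$ leftover.

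\textbf{Why your induction does not close.} In your $z_1=1$ branch, cheating Bob satisfies
\[
P_B^*(Q_j)\le\Bigl(\tfrac12+\eps\Bigr)\bigl((2-2x)+2\eps\bigr)=(1-x)+\eps(3-2x)+2\eps^2.
\]
Even with $x\ge 1/2$ this is only $\le(1-x)+2\eps+2\eps^2$, not $(1-x)+2\eps$; the extra $2\eps^2$ does not vanish and your inductive hypothesis is not reproduced. (Your sentence ``a similar algebraic simplification yields the matching bound'' is therefore incorrect.) Iterating gives at best $\eps_0\le 2\eps/(1-2\eps)$, not $2\eps$. The underlying reason is that your recursion is effectively $Comb(Q_{j-1},WCF(1,0))$, so the ``gap'' entering the combining lemma is $1-x'$ rather than $2^{-j}$, and it does not halve.

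\textbf{What to take away.} Your plan proves $\eps_0=O(\eps)$, which is all that is needed for the paper's final $\tfrac{1}{\sqrt{2}}+O(\eps)$ theorem, but it does not establish the stated bound $\eps_0\le 2\eps$. To get the clean constant $2$, do what the paper does: keep both endpoints of the bisection interval as protocols and combine them, so that the bias increment at step $k$ is exactly $\eps\cdot 2^{-k}$.
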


The protocol $Q$ is a sequential composition of the $WCF(1/2,\eps)$ protocol $P$. In high level,
we use $P$ in order to combine two weak coin flipping protocols with parameters $z_1$ and $z_2$ into a new protocol with parameter $\frac{z_1 + z_2}{2}$. Then, by recursion, for any given $z$ we can create a protocol $Q$ with parameter $x$ that rapidly converges to $z$. We also prove that the bias of $Q$ is at most $2\eps$. 

Assume we have a $WCF(z_1,\eps_0)$ protocol $P_1$ and a $WCF(z_2,\eps_0)$ protocol $P_2$ each with at most $M$ rounds of communication and $z_2 \geq z_1$. We combine them in the following way. 

\begin{center} $\mathbf{Comb(P_1,P_2)}$ \end{center}
\begin{itemize} 
\item Alice and Bob run $P$. 
\item If Alice wins, run $P_2$. If Bob wins, run $P_1$. If $P$ Aborts then Abort.
\end{itemize} $ \ $ \\
Note that this protocol uses at most $N+M$ rounds. We have
\begin{lemma}\label{comb}
$Comb(P_1,P_2)$ is a $WCF(\frac{z_1 + z_2}{2},\varepsilon_0 + \varepsilon(z_2 - z_1))$ protocol.
\end{lemma}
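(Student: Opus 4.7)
The plan is to verify the four defining properties of a $WCF\!\left(\tfrac{z_1+z_2}{2},\,\varepsilon_0+\varepsilon(z_2-z_1)\right)$ protocol by conditioning on the outcome of the initial balanced flip $P$. The honest case is immediate: conditioned on Alice winning $P$ (probability $1/2$) she then wins $P_2$ with probability $z_2$, while conditioned on Bob winning $P$ she wins $P_1$ with probability $z_1$, so her honest win probability is $\tfrac12 z_2 + \tfrac12 z_1 = \tfrac{z_1+z_2}{2}$, and the analogous identity gives honest Bob winning with probability $1 - \tfrac{z_1+z_2}{2}$.

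For cheating Alice, I would let $\alpha := \Pr[\text{Alice wins }P]$ under her chosen strategy against an honest Bob, so that the WCF guarantee for $P$ gives $\alpha \le 1/2 + \varepsilon$, and let $\beta := \Pr[\text{Bob wins }P]$, with the residual mass $1-\alpha-\beta$ corresponding to $P$ aborting (in which case $Comb(P_1,P_2)$ also aborts and Alice does not win). Conditioning on the three outcomes of $P$ and invoking the cheating bounds for $P_1$ and $P_2$ (which hold regardless of the state Alice enters each subprotocol with), Alice's win probability in the combined protocol is at most $\alpha(z_2+\varepsilon_0) + \beta(z_1+\varepsilon_0)$. Using $\beta \le 1-\alpha$ together with $z_1+\varepsilon_0 \ge 0$, this becomes at most $\alpha(z_2+\varepsilon_0) + (1-\alpha)(z_1+\varepsilon_0) = (z_1+\varepsilon_0)+\alpha(z_2-z_1)$; plugging in $\alpha \le 1/2+\varepsilon$ (valid because $z_2 \ge z_1$) yields exactly $\tfrac{z_1+z_2}{2}+\varepsilon_0+\varepsilon(z_2-z_1)$.

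The argument for cheating Bob is the mirror image, with one small twist: the WCF guarantee on $P$ now constrains only $p_B := \Pr[\text{Bob wins }P] \le 1/2+\varepsilon$, whereas Bob's win probability in the combined protocol is bounded by $p_A(1-z_2+\varepsilon_0)+p_B(1-z_1+\varepsilon_0)$ with $p_A := \Pr[\text{Alice wins }P]$, on which $P$ gives no direct bound. The only constraints are $p_A+p_B \le 1$ and $p_B \le 1/2+\varepsilon$, and both coefficients of the objective are nonnegative with the coefficient of $p_B$ the larger one (since $z_2 \ge z_1$); so the maximum is attained at $p_B = 1/2+\varepsilon$ and $p_A = 1/2-\varepsilon$, producing exactly the desired bound $1 - \tfrac{z_1+z_2}{2}+\varepsilon_0+\varepsilon(z_2-z_1)$.

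The only real subtlety, and the main thing to get right, is absorbing the abort outcome of $P$ into the ``does not win'' case without being pessimistic. This is safe here precisely because in both maximizations the objective is monotone increasing in $\alpha,\beta$ (respectively $p_A,p_B$) thanks to the nonnegativity of every coefficient $z_i+\varepsilon_0$ and $1-z_i+\varepsilon_0$; hence the cheater's worst case puts zero mass on abort, and the simple two-variable linear estimate above is already tight enough to recover the claimed bias.
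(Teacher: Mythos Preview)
Your proof is correct and follows essentially the same approach as the paper: condition on the outcome of $P$, bound the conditional win probabilities in $P_1,P_2$ by $z_i+\varepsilon_0$ (resp.\ $1-z_i+\varepsilon_0$), replace the losing-branch probability by $1$ minus the winning-branch probability, and then plug in the $1/2+\varepsilon$ bound from $P$ using $z_2\ge z_1$. You are a bit more explicit than the paper about the Bob case (the paper simply says ``similar calculation'') and about why the abort mass can be ignored, but the argument is the same.
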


\begin{proof} \\
{\bf Alice and Bob are honest}
If Alice and Bob are honest then the protocol never aborts. We have Pr[ Alice wins ] $= \frac{z_1 + z_2}{2}$ and  Pr[ Bob wins ] $= 1 - \frac{z_1 + z_2}{2}$. 

\paragraph{Alice cheats and Bob is honest} Let $x = $ Pr[Alice wins  P] ; $y = $ Pr[Bob wins  P]; $u = $ Pr[Alice wins $ P_2 \ |$ Alice wins P]; $v = $ Pr[Alice wins  $P_1 \ |$ Bob wins P].  We know the following inequalities concerning these probabilities: 
\[
x + y \le 1 \ \ \ \ x \le 1/2 + \varepsilon \ \ \ \ u \le z_2 + \varepsilon_0 \ \ \ \ v \le z_1 + \varepsilon_0
\]

Note that the last two inequalities hold, since the biases for the protocols $P_1$ and $P_2$ do not increase depending on the outcome of $P$. We have
\begin{eqnarray*}
\lefteqn{\pr{ \mbox{ Alice wins } Comb(P_1,P_2)}}\\ & = & x\cdot u + y \cdot v
\;\; \le  \;\; x(z_2 + \varepsilon_0) + (1-x)(z_1 + \varepsilon_0) 
\;\; = \;\; (z_1 + \varepsilon_0) + x(z_2 - z_1)  \\
 & \le & (z_1 + \varepsilon_0) + (1/2 + \varepsilon)(z_2 - z_1) \qquad \textrm{since } z_2 \ge z_1 \\
& \le & \frac{z_1 + z_2}{2} + \varepsilon_0 + \varepsilon(z_2 - z_1) 
\end{eqnarray*}

\paragraph{Bob cheats and Alice is honest} Using a similar calculation as in the previous case, we have $\textrm{Pr[Bob wins } Comb(P_1,P_2)] \le  \frac{(1 - z_2) + (1 - z_1)}{2} + \varepsilon_0 + \varepsilon(z_2 - z_1) = 1 - \frac{z_1 + z_2}{2} + \varepsilon_0 + \varepsilon(z_2 - z_1) $.
\end{proof}
We now show the following inductive Lemma 

\begin{lemma} \label{Induction}
Suppose we have a $WCF(1/2,\varepsilon)$ protocol $P$ that uses $N$ rounds of communication. Then $\forall z \in [0,1] $ and $ \forall k \in \N,$ 
we can construct a $WCF(x_1,\varepsilon_0)$ protocol $P_1$ and a $WCF(x_2,\varepsilon_0)$ protocol $P_2$ such that
\begin{itemize}
\item $P_1,P_2$ each use at most $k\cdot N$ rounds.
\item $ x_1 \le z \le x_2 \;$ and $ \;x_2 - x_1 = 2^{-k}$.
\item $\varepsilon_0 \le (2 - 2(x_2 - x_1))\varepsilon$.
\end{itemize}
\end{lemma}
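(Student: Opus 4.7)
The plan is to prove Lemma 2 by induction on $k$, using the combination operation $Comb$ from Lemma 1 to successively halve the length of the interval $[x_1, x_2]$ containing $z$.

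For the base case $k=0$, I would take $P_1$ to be the trivial protocol in which both honest players output ``Bob wins'' without communicating, and $P_2$ to be the trivial protocol in which both honest players output ``Alice wins'' without communicating. These are $WCF(0,0)$ and $WCF(1,0)$ protocols respectively, use $0 \le 0 \cdot N$ rounds, and satisfy $x_1 = 0 \le z \le 1 = x_2$ with $x_2 - x_1 = 1 = 2^0$. The bias bound reads $(2 - 2\cdot 1)\varepsilon = 0$, which is met with equality.

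For the inductive step, assume the claim for $k$ and take any $z \in [0,1]$. By the inductive hypothesis there exist protocols $Q_1, Q_2$ with parameters $y_1 \le z \le y_2$, satisfying $y_2 - y_1 = 2^{-k}$, using at most $kN$ rounds, and with biases bounded by $(2 - 2\cdot 2^{-k})\varepsilon = (2 - 2^{1-k})\varepsilon$. Let $m = (y_1+y_2)/2$ be the midpoint. If $z \le m$, set $x_1 = y_1$, $x_2 = m$, keep $P_1 = Q_1$, and define $P_2 = Comb(Q_1, Q_2)$, which by Lemma 1 is a weak coin flipping protocol with parameter $(y_1+y_2)/2 = m = x_2$. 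If instead $z > m$, set $x_1 = m$, $x_2 = y_2$, take $P_1 = Comb(Q_1,Q_2)$ and $P_2 = Q_2$. Either way $x_1 \le z \le x_2$ and $x_2 - x_1 = 2^{-(k+1)}$. The round count for the newly built $Comb(Q_1,Q_2)$ is at most $N + kN = (k+1)N$, while the unchanged protocol already uses at most $kN \le (k+1)N$ rounds.

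The only real content is the bias bookkeeping, which I expect to work out exactly. The protocol reused unchanged has bias at most $(2 - 2^{1-k})\varepsilon \le (2 - 2^{-k})\varepsilon$, which matches the target $(2 - 2(x_2-x_1))\varepsilon = (2 - 2^{-k})\varepsilon$. For the combined protocol, Lemma 1 gives bias at most
\[
(2 - 2^{1-k})\varepsilon + \varepsilon(y_2 - y_1) \;=\; (2 - 2^{1-k})\varepsilon + \varepsilon \cdot 2^{-k} \;=\; (2 - 2^{-k})\varepsilon,
\]
again matching $(2 - 2(x_2 - x_1))\varepsilon$. Thus both protocols meet the bound, completing the induction. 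The one place requiring care is this arithmetic: the cancellation is tight, so it is essential that $Comb$ incurs additive error only $\varepsilon(y_2 - y_1)$ (proportional to the current interval length) rather than a fixed $\varepsilon$; this is precisely what Lemma 1 delivers, which is what allows the induction to close and yields Proposition 2 by taking the smaller bias of $2\varepsilon$.
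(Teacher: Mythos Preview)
Your proof is correct and follows essentially the same approach as the paper: induction on $k$ starting from the trivial $WCF(0,0)$ and $WCF(1,0)$ protocols, and at each step forming $Comb$ of the current pair to bisect the interval, keeping whichever half contains $z$. The bias arithmetic you carry out, $(2-2^{1-k})\varepsilon + 2^{-k}\varepsilon = (2-2^{-k})\varepsilon$, is exactly the computation the paper performs (written there as $\varepsilon_0' \le (2 - (x_2-x_1))\varepsilon = (2 - 2(u-x_1))\varepsilon$), and your observation that the unchanged protocol already satisfies the weaker bound is implicit in the paper's statement that both new protocols have bias at most $\varepsilon_0'$.
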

\begin{proof}
Fix $z \in [0,1]$. We show this result by induction on $k$. For $k=0$, we clearly have a $WCF(0,0)$ protocol (a protocol where Bob always wins) and a $WCF(1,0)$ (a protocol where Alice always wins) that use no rounds of communication. We suppose the Lemma is true for $k$ and we show it for $k+1$.

Let $x_1,x_2,P_1,P_2,\varepsilon_0$ that satisfy the above properties for $k$. Let $P'$ be the $Comb(P_1,P_2)$ protocol and $u = \frac{x_1 + x_2}{2}$. $P'$ uses at most $(k+1)N$ rounds and from Lemma \ref{comb}, we know that $P'$ is a $WCF(u, \varepsilon'_0 = \varepsilon_0 + (x_2 - x_1)\varepsilon)$ protocol. From the induction step we have that $\varepsilon'_0 \le (2 - 2(x_2 - x_1))\varepsilon + (x_2 - x_1)\varepsilon \le (2 - (x_2 - x_1))\varepsilon$. We now distinguish two cases
\begin{itemize}
\item If $z \le u$, consider the protocols $P_1$ and $P'$. Each one uses at most $(k+1)N$ rounds. Also, $x_1 \leq z \leq u$ and $u - x_1 = \frac{x_2 - x_1}{2} = 2^{-(k+1)}$. Finally, $\varepsilon'_0 \le (2 - (x_2 - x_1))\varepsilon = (2 - 2(u - x_1))\varepsilon$ which concludes the proof.
\item If $z > u$, consider the protocols $P'$ and $P_2$. Each one uses at most $(k+1)N$ rounds. Also, $u \leq z \leq x_2$ and $x_2 - u = \frac{x_2 - x_1}{2} = 2^{-(k+1)}$. Finally, $\varepsilon'_0 \le (2 - (x_2 - x_1))\varepsilon = (2 - 2(x_2 - u))\varepsilon$ which concludes the proof.
\end{itemize}
\end{proof}

In Lemma~\ref{Induction}, we have $|x_1 - z| \leq (x_2-x_1) \le 2^{-k}$ and $\varepsilon_0 \le 2\eps$. Hence this Lemma directly implies Proposition~\ref{Approx} by considering $Q = P_1$.

%%%%%%%%%%%%%%%%%%%%%%%%%%%%%%%%%%%%%%%%%%%%%%%%%%%%%%%%%%%
%%%%%%%%%%%%%%%%%%%%%%%%%%%%%%%%%%%%%%%%%%%%%%%%%%%%%%%%%%%
%%%%%%%%%%%%%%%%%%%%%%%%%%STRONG%%%%%%%%%%%%%%%%%%%%%%%%%%%
%%%%%%%%%%%%%%%%%%%%%%%%%%%%%%%%%%%%%%%%%%%%%%%%%%%%%%%%%%%
%%%%%%%%%%%%%%%%%%%%%%%%%%%%%%%%%%%%%%%%%%%%%%%%%%%%%%%%%%%

\subsection{Strong coin flipping from unbalanced weak coin flipping}\label{Main}

We calculate the cheating probability of our protocol $S$ that uses a $WCF(z,\eps)$ protocol $Q$. 
\begin{proposition}\label{propSCF}
The protocol $S$ is a strong coin flipping protocol with $N + 2$ rounds of communication and cheating probabilities $P^*_A \le \frac{1}{2 - z - \varepsilon}$ and $P^*_B \le \frac{2 - z + \varepsilon}{2}$.
\end{proposition}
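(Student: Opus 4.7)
\begin{proofsketch}
The plan is to bound $P^*_B$ and $P^*_A$ separately, using the $WCF(z,\varepsilon)$ guarantees on $Q$ together with the symmetry of $S$ under swapping $0$ and $1$. The round count is immediate: Alice's initial message $a$, the $N$ rounds of $Q$, and Bob's closing message $b$ contribute at most $N+2$ rounds of communication.

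For Bob's cheating probability I would assume by symmetry that Bob wants to force $c=0$ and condition on Alice's honest bit $a$. If $a=0$, Bob can simply play $Q$ honestly and, should he win, output $b=0$: either Alice wins $Q$ and outputs $a=0$ or Bob wins and outputs $0$, so $\Pr[c=0 \mid a=0] \le 1$. If $a=1$, then in order to force $c=0$ Bob must win $Q$ and then output $b=\overline{a}=0$, so $\Pr[c=0 \mid a=1] \le \Pr[\text{Bob wins }Q] \le (1-z)+\varepsilon$ by the WCF guarantee against honest Alice. Averaging over the uniform $a$ would give $P^*_B \le \tfrac{1}{2}+\tfrac{1}{2}(1-z+\varepsilon)=\tfrac{2-z+\varepsilon}{2}$.

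For Alice's cheating probability I would again fix a target, say $c=0$. Let $q_A$ and $q_B$ denote the probabilities that Alice and Bob respectively win $Q$ under her chosen strategy against honest Bob, so $q_A+q_B\le 1$ and $q_A\le z+\varepsilon$. Alice's two natural strategies are: (i) send $a=0$, giving $\Pr[c=0]=q_A+p\,q_B \le q_A+p(1-q_A)\le (z+\varepsilon)+p(1-z-\varepsilon)$ after using $p\le 1$ and $q_A\le z+\varepsilon$; (ii) send $a=1$, giving $\Pr[c=0]=(1-p)q_B\le 1-p$ using only the trivial $q_B\le 1$. I would then choose $p$ to equate these two bounds, which forces $p=\tfrac{1-z-\varepsilon}{2-z-\varepsilon}$; a short calculation using $(z+\varepsilon)+(1-z-\varepsilon)=1$ shows both expressions collapse to $\tfrac{1}{2-z-\varepsilon}$, giving the claimed $P^*_A\le \tfrac{1}{2-z-\varepsilon}$.

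The one real subtlety to handle carefully is the asymmetry of the WCF guarantee: it upper bounds the cheater's own winning probability in $Q$ but says nothing about the honest opponent's winning probability. This is why in strategy (ii) above cheating Alice, who \emph{wants} honest Bob to win $Q$, leaves us with only the trivial $q_B\le 1$; the $(1-p)$ factor multiplying $q_B$ is what rescues the analysis. Balancing this against strategy (i) is precisely what pins down the optimal $p$, and once that $p$ is chosen, the final bound on $P^*_A$ no longer depends on $p$.
\end{proofsketch}
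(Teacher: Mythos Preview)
Your sketch is correct and follows essentially the same approach as the paper: conditioning on the value of $a$, using the $WCF(z,\varepsilon)$ bound to control the cheater's winning probability in $Q$ (and only the trivial bound $q_B\le 1$ when cheating Alice wants Bob to win), and then choosing $p=\tfrac{1-z-\varepsilon}{2-z-\varepsilon}$ to equalize the two cases for Alice. The only item you omit is the one-line honest case, which the paper dispatches by the $0/1$ symmetry of $S$.
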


\begin{proof}\\
{\bf Alice and Bob are honest}
If both players are honest then they never abort. Moreover, since the protocol is symmetric in $0$ and $1$, we have $\pr{c=0} = \pr{c=1} = 1/2 $.

\noindent
{\bf Alice cheats and Bob is honest}
We prove that $\pr{c = 0} \le \frac{1}{2 - z - \varepsilon}$. By symmetry, the same holds for $\pr{c=1}$.
Since Alice cheats, she can choose arbitrarily between $a = 0$ and $a = 1$ instead of picking $a$ uniformly at random. Hence, $Pr[c = 0] \le \max\{\pr{ c = 0 | a = 0},\pr{ c = 0 | a = 1}\}$.
\begin{itemize}
\item We first calculate $\pr{c = 0 | a = 0}$. \\ Let $x = \pr{\mbox{Alice wins } Q | a~=~0}$ and $y = \pr{\mbox{Bob wins } Q | a~=~0}$. We have $\pr{c = 0 | a = 0} = x \cdot 1 + y\cdot p$. Note that $x + y \le 1$ and also $x \le z + \varepsilon$, since the maximum bias with which Alice can win $Q$ is independent of the value of $a$. We have
\begin{eqnarray*}
\pr{c = 0 |  a = 0 } & = & x \cdot 1 + y\cdot p 
\;\; \le \;\; x + (1-x)p 
\;\; = \;\; p + x(1 - p) \\
& \le & p + (z + \varepsilon)(1-p)
\end{eqnarray*}
\item We now calculate $\pr{c = 0 | a = 1}$. \\ Let $x = \pr{ \mbox{Alice wins } Q | a = 1}$ and $y = \pr{ \mbox{Bob wins } Q | a = 1}$. We have 
\[ \pr{c = 0 | a~=~1} = x\cdot 0 + y(1 - p) \le y(1 - p) \le 1 - p \]
which is achievable since Alice could always let Bob win $Q$.
\end{itemize}
Since $\pr{c = 0} \le \max\{\pr{ c = 0 | a = 0},\pr{ c = 0 | a = 1}\}$, we choose $p$ such that the upper bounds for $\pr{ c = 0 | a = 0}$ and $\pr{ c = 0 | a = 1}$ are equal.
\begin{eqnarray*}
p + (z + \varepsilon)(1-p) & = & 1 - p \\
%p(2 - z - \varepsilon) & = & 1 - z - \varepsilon \\
p & = & \frac{1 - z - \varepsilon}{2 - z - \varepsilon} 
\end{eqnarray*}
With this value of $p$, we have 
\begin{eqnarray*}
Pr[c = 0] & \le & \max\{\pr{ c = 0 | a = 0},\pr{ c = 0 | a = 1}\}
\;\; = \;\; 1 - p 
\;\; \le \;\; \frac{1 }{2 - z - \varepsilon} 
\end{eqnarray*}
Since the protocol is symmetric in $0$ and $1$, we also have $\pr{c = 1} \le \frac{1 }{2 - z - \varepsilon} $ and hence $P^*_A \le \frac{1 }{2 - z - \varepsilon} $. \\

\noindent
{\bf Bob cheats and Alice is honest}
We prove that $\pr{c = 0} \le \frac{2 - z + \varepsilon}{2}$. By symmetry, the same holds for $\pr{c = 1}$.
Alice is honest and picks $a$ uniformly at random. We first have $\pr{c = 0 | a = 0}  \le 1$.
We now upper bound $\pr{c = 0 | a = 1}$. Let $x = \pr{ \mbox{Bob wins } Q | a = 1}$ and $y = \pr{ \mbox{Alice wins } Q | a = 1}$. We have 
\[ \pr{c = 0 | a = 1} \le x\cdot 1 + y\cdot 0 \le x \le 1 - z + \varepsilon
\]
Since Alice is honest, we have $\pr{a = 0} = \pr{a = 1} = 1/2$ and hence:
\begin{eqnarray*}
\pr{c = 0 } & = & \pr{c = 0 | a = 0 }\cdot \pr{ a = 0 } + \pr{ c = 0 | a = 1 } \cdot \pr{a = 1} \\
& = & \frac{1}{2}\left( \pr{c = 0 | a = 0 } + \pr{c = 0 | a = 1 } \right) \\
& \le & \frac{1}{2} + \frac{1 - z + \varepsilon}{2} \\
& = & \frac{2 - z + \varepsilon}{2}
\end{eqnarray*}
Since the protocol is symmetric in $0$ and $1$, we also have $\pr{c = 1} \le \frac{2 - z + \varepsilon}{2}$ and hence $P^*_B \le \frac{2 - z + \varepsilon}{2}$. \\

\end{proof}

\subsection{Putting it all together} \label{Conclusion}

To conclude, we have to optimize $z$. In the case where there exists an ideal weak coin flipping protocol $WCF(1/2,0)$, it is easy to see that in order to equalize the cheating probabilities $P^*_A$ and $P^*_B$, we need to take $z=2-\sqrt{2}$. If also our Proposition \ref{Approx} was ideal, \ie if from $P$ we could create perfectly a $WCF(2-\sqrt{2},0)$ protocol $Q$, then $S$ would have cheating probability exactly $\frac{1}{\sqrt{2}}$.

In general, we need to take care of the small bias $\eps$ of the initial $WCF(1/2,\eps)$ protocol $P$ and the error of our Proposition \ref{Approx}. However, we will see that the overall increase in the cheating probability of our protocol $S$ is only $O(\eps)$.

\begin{theorem}\label{SCF}
If there exists a $WCF(1/2,\varepsilon)$ protocol $P$ that uses $N$ rounds of communication then there exists a strong coin flipping protocol $S$ that uses $2\lceil \log(\frac{1}{\eps}) \rceil \cdot N + 2$ rounds with cheating probability at most $\frac{1}{\sqrt{2}} + \sqrt{2}\eps+o(\eps)$.
\end{theorem}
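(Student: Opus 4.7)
\begin{proofsketch}
The plan is to combine Proposition \ref{Approx} and Proposition \ref{propSCF} with a carefully chosen target parameter and a carefully chosen approximation depth. As motivated at the start of Section~\ref{Conclusion}, the ideal choice of parameter for the unbalanced weak coin flipping subprotocol is $z^* = 2-\sqrt{2}$: plugging $z = z^*$ and $\varepsilon = 0$ into the bounds $P^*_A \le \tfrac{1}{2-z-\varepsilon}$ and $P^*_B \le \tfrac{2-z+\varepsilon}{2}$ of Proposition \ref{propSCF} equalises both cheating probabilities at exactly $1/\sqrt{2}$. For the actual construction we must work with a nearby approximable parameter and with nonzero bias, so the task is simply to show that the resulting perturbation is $O(\varepsilon)$.

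Concretely, I would first invoke Proposition~\ref{Approx} with target $z = 2-\sqrt{2}$ and depth $k = 2\lceil \log(1/\varepsilon)\rceil$. This produces a $WCF(x,\varepsilon_0)$ protocol $Q$ satisfying $|x-(2-\sqrt{2})| \le 2^{-k} \le \varepsilon^2$ and $\varepsilon_0 \le 2\varepsilon$, and using $kN = 2\lceil \log(1/\varepsilon)\rceil N$ rounds. I would then plug this $Q$ into the protocol $S$ of Section~\ref{Protocol}, choosing the parameter $p$ as in the proof of Proposition~\ref{propSCF} (namely $p = \tfrac{1-x-\varepsilon_0}{2-x-\varepsilon_0}$). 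Since Alice's initial message and Bob's possible flip add two rounds to the $kN$ rounds of $Q$, the resulting protocol $S$ has exactly $2\lceil \log(1/\varepsilon)\rceil N + 2$ rounds, matching the round complexity claimed in the theorem.

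The remaining step is the quantitative bound. Applying Proposition~\ref{propSCF} (with $z$ and $\varepsilon$ replaced by $x$ and $\varepsilon_0$) gives
\[
P^*_A \le \frac{1}{2-x-\varepsilon_0}, \qquad P^*_B \le \frac{2-x+\varepsilon_0}{2}.
\]
Substituting the bounds $x \ge 2-\sqrt{2} - \varepsilon^2$ and $\varepsilon_0 \le 2\varepsilon$, we immediately get $P^*_B \le \tfrac{\sqrt{2}+2\varepsilon+\varepsilon^2}{2} = \tfrac{1}{\sqrt{2}} + \varepsilon + O(\varepsilon^2)$. For $P^*_A$, the denominator satisfies $2-x-\varepsilon_0 \ge \sqrt{2} - 2\varepsilon - \varepsilon^2$, so a first-order Taylor expansion of $1/(\sqrt{2}-t)$ around $t=0$ yields $P^*_A \le \tfrac{1}{\sqrt{2}} + \varepsilon + O(\varepsilon^2)$. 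Both bounds are therefore at most $\tfrac{1}{\sqrt{2}} + \sqrt{2}\varepsilon + o(\varepsilon)$, completing the proof.

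There is no real obstacle here beyond bookkeeping: the two genuine difficulties (constructing the balanced weak coin flipping from scratch, and converting balanced to unbalanced weak coin flipping) have been packaged into Propositions~\ref{Mochon} and~\ref{Approx} respectively. The only point requiring slight care is to take $k$ large enough that the approximation error $2^{-k}$ is dominated by $\varepsilon$ in the final bound; the choice $k = 2\lceil \log(1/\varepsilon)\rceil$ is more than sufficient, since $2^{-k} \le \varepsilon^2$ is already absorbed in the $o(\varepsilon)$ term.
\end{proofsketch}
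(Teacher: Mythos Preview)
Your proposal is correct and follows essentially the same approach as the paper: invoke Proposition~\ref{Approx} with target $z=2-\sqrt{2}$ and depth $k=2\lceil\log(1/\varepsilon)\rceil$, then feed the resulting $WCF(x,\varepsilon_0)$ into Proposition~\ref{propSCF} and expand the bounds. Your first-order expansion for $P^*_A$ in fact yields the slightly sharper constant $\tfrac{1}{\sqrt{2}}+\varepsilon+O(\varepsilon^2)$ (the paper records the looser $\tfrac{1}{\sqrt{2}}+\sqrt{2}\varepsilon+o(\varepsilon)$), but either way the theorem's stated bound is met.
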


\begin{proof}
Starting from the $WCF(1/2,\eps)$ weak coin flipping protocol $P$ with $N$ rounds, we can use Proposition \ref{Approx} with $k=2\lceil \log(\frac{1}{\eps}) \rceil$ and construct a $WCF(x,\varepsilon')$ protocol $Q$ with the following properties
\begin{itemize}
\item $Q$ uses $2\lceil \log(\frac{1}{\eps}) \rceil \cdot N$ rounds.
\item $| x - (2-\sqrt{2}) | \le \eps^2$.
\item $\varepsilon' \le 2\varepsilon$.
\end{itemize}

%$T$ is the parameter that we want to optimize depending on $\varepsilon$ and $\delta$.
Then, we use the protocol $Q$ in the strong coin flipping protocol $S$ we described in Section \ref{Protocol} and by Proposition \ref{propSCF} we have that $S$ has  $2\lceil \log(\frac{1}{\eps}) \rceil\cdot N+2$ rounds and 
\begin{eqnarray*}
P^*_A & = & \frac{1 }{2 - x - \varepsilon'} 
\;\; \le \;\; \frac{1}{\sqrt{2} - 2\eps-\eps^2} \leq \frac{1}{\sqrt{2}} + \sqrt{2}\varepsilon+o(\eps)\\
%\;\; \le \;\; \frac{1}{\sqrt{2}} + 2\sqrt{2}\varepsilon + \delta(1 + 4\varepsilon)\\
P^*_B & = & \frac{2 - x + \varepsilon'}{2} 
\; \le \;\; \frac{\sqrt{2} +  2\varepsilon +\eps^2}{2} 
\;\; = \;\; \frac{1}{\sqrt{2}} + \varepsilon+o(\eps)
%\;\; \le \;\; \frac{1}{\sqrt{2}} + 2\sqrt{2}\varepsilon + \delta(1 + 4\varepsilon)
\end{eqnarray*}
\end{proof}

\noindent
Using Theorem \ref{SCF} and Mochon's weak coin flipping protocol (Proposition \ref{Mochon}) we conclude that
\begin{corollary}
For any $\eps > 0$, there exists a strong coin flipping protocol with cheating probability $\frac{1}{\sqrt{2}} + \eps$.
\end{corollary}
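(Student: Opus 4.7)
The plan is to simply combine the two main ingredients that are already in hand: Mochon's weak coin flipping protocol (Proposition \ref{Mochon}), which provides a $WCF(1/2,\varepsilon')$ protocol for any desired $\varepsilon'>0$, and Theorem \ref{SCF}, which converts any such weak coin flipping protocol into a strong coin flipping protocol with cheating probability at most $\frac{1}{\sqrt{2}}+\sqrt{2}\varepsilon'+o(\varepsilon')$. Given a target bias $\varepsilon>0$ for the strong protocol, the task reduces to choosing $\varepsilon'$ small enough that the bound from Theorem \ref{SCF} lies below $\frac{1}{\sqrt{2}}+\varepsilon$.

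Concretely, I would first fix an arbitrary $\varepsilon>0$. Since $\sqrt{2}\varepsilon'+o(\varepsilon') \to 0$ as $\varepsilon' \to 0$, there exists a sufficiently small $\varepsilon'>0$ (for instance, any $\varepsilon'$ with $\sqrt{2}\varepsilon' + o(\varepsilon') \le \varepsilon$, which certainly holds for $\varepsilon'$ a small enough constant multiple of $\varepsilon$, e.g.\ $\varepsilon' \le \varepsilon/2\sqrt{2}$ once $\varepsilon$ is below some threshold) for which this inequality holds. Invoking Proposition \ref{Mochon} with parameter $\varepsilon'$ yields a $WCF(1/2,\varepsilon')$ protocol $P$ with some number $N$ of rounds. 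Plugging $P$ into the construction of Theorem \ref{SCF} produces a strong coin flipping protocol $S$ whose cheating probability satisfies
\[
\max\{P_A^*,P_B^*\} \;\le\; \frac{1}{\sqrt{2}} + \sqrt{2}\varepsilon' + o(\varepsilon') \;\le\; \frac{1}{\sqrt{2}} + \varepsilon,
\]
which is exactly what the corollary claims.

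There is essentially no obstacle here beyond bookkeeping the parameters: the real work has already been done in Mochon's construction (which is deep) and in Theorem \ref{SCF} (our reduction). The only thing to verify is that the function $\varepsilon' \mapsto \sqrt{2}\varepsilon' + o(\varepsilon')$ tends to $0$, so that we can always match any prescribed target bias $\varepsilon$ on the strong coin flipping side by choosing $\varepsilon'$ small enough on the weak coin flipping side. Note also that this composition is round-efficient: the resulting protocol uses only $2\lceil \log(1/\varepsilon')\rceil \cdot N + 2$ rounds, so it remains a finite protocol for every fixed $\varepsilon>0$, although of course the round complexity grows as $\varepsilon \to 0$ through the dependence of Mochon's $N$ on $\varepsilon'$.
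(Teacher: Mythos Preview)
Your proposal is correct and matches the paper's own approach: the corollary is stated as an immediate consequence of combining Theorem~\ref{SCF} with Mochon's result (Proposition~\ref{Mochon}), and the paper gives no further argument beyond that. Your write-up simply spells out the routine parameter-matching that the paper leaves implicit.
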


Last, note that our strong coin flipping protocol uses $O(N \cdot\log(\frac{1}{\eps}))$ rounds, where $N$ is the number of rounds of Mochon's weak coin flipping protocol.

\section{Conclusion}
In this paper, we presented the first quantum strong coin flipping protocol with a cheating probability arbitrarily close to the optimal value $\frac{1}{\sqrt{2}}$. Our protocol uses as a subroutine the quantum weak coin flipping protocol designed by Mochon which is arbitrarily close to optimal. Note that except when using this quantum weak coin flipping protocol, our entire protocol is classical.

We would like to note that Mochon's protocol is still not very well understood (protocol's unitary description, number of rounds). It is important to get a better understanding of that protocol and/or find a simpler construction of an optimal quantum weak coin flipping protocol. Moreover, it would be interesting to study what other cryptographic primitives can be derived from weak or strong coin flipping.

\newcommand{\etalchar}[1]{$^{#1}$}


\begin{thebibliography}{DKSW07}

\bibitem[ABDR04]{ABD+04}
Andris Ambainis, Harry Buhrman, Yevgeniy Dodis, and Hein Rohrig.
\newblock Multiparty quantum coin flipping.
\newblock In {\em CCC '04: Proceedings of the 19th IEEE Annual Conference on
  Computational Complexity}, pages 250--259, Washington, DC, USA, 2004. IEEE
  Computer Society.

\bibitem[Amb01]{Amb01}
Andris Ambainis.
\newblock A new protocol and lower bounds for quantum coin flipping.
\newblock In {\em STOC '01: Proceedings of the thirtieth annual ACM symposium
  on Theory of computing}, Washington, DC, USA, 2001. IEEE Computer Society.

\bibitem[Amb02]{Amb02}
Andris Ambainis.
\newblock Lower bound for a class of weak quantum coin flipping protocols,
  2002.
\newblock quant-ph/0204063.

\bibitem[ATVY00]{ATVY00}
Dorit Aharonov, Amnon {Ta-Shma}, Umesh~V. Vazirani, and Andrew~C. Yao.
\newblock Quantum bit escrow.
\newblock In {\em STOC '00: Proceedings of the thirty-second annual ACM
  symposium on Theory of computing}, pages 705--714, New York, NY, USA, 2000.
  ACM.

\bibitem[BB84]{BB84}
Bennett and Brassard.
\newblock Quantum cryptography: Public key distribution and coin tossing.
\newblock in Proc. Of IEEE Inter. Conf. on Computer Systems and Signal
  Processing, Bangalore, Kartarna, (Institute of Electrical and Electronics
  Engineers, New York, 1984.

\bibitem[BCH{\etalchar{+}}08]{BCH+08}
Harry Buhrman, Matthias Christandl, Patrick Hayden, Hoi-Kwong Lo, and Stephanie
  Wehner.
\newblock Possibility, impossibility and cheat-sensitivity of quantum bit
  string commitment.
\newblock {\em Physical Review A}, 78:022316, 2008.

\bibitem[Blu81]{Blu81}
Manuel Blum.
\newblock Coin flipping by telephone.
\newblock In {\em CRYPTO}, pages 11--15, 1981.

\bibitem[DKSW07]{DKSW07}
Giacomo~Mauro {D'Ariano}, Dennis Kretschmann, Dirk Schlingemann, and {Reinhard
  F.} Werner.
\newblock Reexamination of quantum bit commitment: the possible and the
  impossible.
\newblock {\em Physical Review A}, 76:032328, 2007.

\bibitem[HK04]{HK04}
Lucien Hardy and Adrian Kent.
\newblock Cheat sensitive quantum bit commitment.
\newblock {\em Physical Review Letters}, 92:157901, 2004.

\bibitem[Kit03]{Kit03}
A~Kitaev.
\newblock Quantum coin-flipping. presentation at the 6th workshop on quantum
  information processing (qip 2003), 2003.

\bibitem[KN04]{KN04}
I.~Kerenidis and A.~Nayak.
\newblock Weak coin flipping with small bias.
\newblock {\em Inf. Process. Lett.}, 89(3):131--135, 2004.

\bibitem[LC97]{LC97}
Hoi-Kwong Lo and H.~F. Chau.
\newblock Is quantum bit commitment really possible?
\newblock {\em Phys. Rev. Lett.}, 78(17):3410--3413, Apr 1997.

\bibitem[May97]{May97}
Dominic Mayers.
\newblock Unconditionally secure quantum bit commitment is impossible.
\newblock {\em Phys. Rev. Lett.}, 78(17):3414--3417, Apr 1997.

\bibitem[Moc04]{Moc04}
Carlos Mochon.
\newblock Quantum weak coin-flipping with bias of 0.192.
\newblock In {\em FOCS '04: Proceedings of the 45th Annual IEEE Symposium on
  Foundations of Computer Science}, pages 2--11, Washington, DC, USA, 2004.
  IEEE Computer Society.

\bibitem[Moc05]{Moc05}
C.~Mochon.
\newblock Large family of quantum weak coin-flipping protocols.
\newblock {\em Phys. Rev. A}, 72(2):022341--+, August 2005.

\bibitem[Moc07]{Moc07}
Carlos Mochon.
\newblock Quantum weak coin flipping with arbitrarily small bias.
\newblock WCF, 2007.
\newblock quant-ph:0711.4114.

\bibitem[NS03]{NS03}
Ashwin Nayak and Peter Shor.
\newblock Bit-commitment-based quantum coin flipping.
\newblock {\em Phys. Rev. A}, 67(1):012304, Jan 2003.

\bibitem[SR01]{SR01}
R.~W. Spekkens and T.~Rudolph.
\newblock Degrees of concealment and bindingness in quantum bit commitment
  protocols.
\newblock {\em Physical Review A}, 65:012310, 2001.

\bibitem[SR02]{SR02}
Robert Spekkens and Terry Rudolph.
\newblock Quantum protocol for cheat-sensitive weak coin flipping.
\newblock {\em Phys. Rev. Lett.}, 89(22):1--4, Nov 2002.

\end{thebibliography}
\end{document}